\newcommand{\Chi}{\raisebox{0.5ex}{\mbox{{\Large $\chi$}}}}
\newcommand{\Partial}{\raisebox{0ex}{\mbox{{\large $\partial$}}}}
 \newtheorem{theorem}{Theorem}[section]
 \newtheorem{lemma}[theorem]{Lemma}
 \newtheorem{remark}[theorem]{Remark}
 \newtheorem{definition}[theorem]{Definition}
 \newcommand{\R}{\mathbb{R}}
 \newcommand{\tel}[1]{\frac{1}{#1}}
 \DeclareMathOperator{\diag}{diag}
 \DeclareMathOperator{\Sym}{Sym}
 \DeclareMathOperator{\PSym}{PSym}
 \DeclareMathOperator{\dev}{dev}
\def\barr{\begin{array}}
\def\id{1\!\!1}
\def\tr{\textrm{tr}}
\def\dd{\displaystyle}
\def\barr{\begin{array}}
\def\earr{\end{array}}
\def\bec#1{\begin{equation}\label{#1}}
\def\becn{\begin{equation*}}
\def\endec{\end{equation}}
\def\endecn{\end{equation*}}
\let\@fnsymbol\@arabic
\begin{document}

\title{The exponentiated Hencky-logarithmic strain energy.\\ Part III: Coupling with idealized isotropic finite strain plasticity}
\author{
Patrizio Neff\thanks{Corresponding author: Patrizio Neff,  \ \ Head of Lehrstuhl f\"{u}r Nichtlineare Analysis und Modellierung, Fakult\"{a}t f\"{u}r
Mathematik, Universit\"{a}t Duisburg-Essen,  Thea-Leymann Str. 9, 45127 Essen, Germany, email: patrizio.neff@uni-due.de}\quad
and \quad
Ionel-Dumitrel Ghiba\thanks{Ionel-Dumitrel Ghiba, \ \ \ \ Lehrstuhl f\"{u}r Nichtlineare Analysis und Modellierung, Fakult\"{a}t f\"{u}r Mathematik,
Universit\"{a}t Duisburg-Essen, Thea-Leymann Str. 9, 45127 Essen, Germany;  Alexandru Ioan Cuza University of Ia\c si, Department of Mathematics,  Blvd.
Carol I, no. 11, 700506 Ia\c si,
Romania; and  Octav Mayer Institute of Mathematics of the
Romanian Academy, Ia\c si Branch,  700505 Ia\c si, email: dumitrel.ghiba@uni-due.de, dumitrel.ghiba@uaic.ro}}

\maketitle

\begin{center}
\textit{Dedicated to David J. Steigmann, a great scientist and good friend}
\end{center}
\bigskip

\begin{abstract}
   We investigate an immediate  application in finite strain multiplicative plasticity of the family of isotropic volumetric-isochoric decoupled strain energies
\begin{align*}
F\mapsto W_{_{\rm eH}}(F):=\widehat{W}_{_{\rm eH}}(U):=\dd\left\{\begin{array}{lll}
\dd\frac{\mu}{k}\,e^{k\,\|\dev_n\log {U}\|^2}+\dd\frac{\kappa}{{\text{}}{2\, {\widehat{k}}}}\,e^{\widehat{k}\,[\tr(\log U)]^2}&\text{if}& \det\, F>0,\vspace{2mm}\\
+\infty &\text{if} &\det F\leq 0,
\end{array}\right.\quad
\end{align*}
based on the Hencky-logarithmic (true, natural) strain tensor $\log U$. Here, $\mu>0$ is the infinitesimal shear modulus,
$\kappa=\frac{2\mu+3\lambda}{3}>0$ is the infinitesimal bulk modulus with $\lambda$ the first Lam\'{e} constant, $k,\widehat{k}$ are dimensionless fitting
parameters, $F=\nabla \varphi$ is the gradient of deformation,  $U=\sqrt{F^T F}$ is the right stretch tensor and $\dev_n\log {U} =\log {U}-\frac{1}{n}\,
\tr(\log {U})\cdot\id$
 is the deviatoric part  of the strain tensor $\log U$.

 Based on the multiplicative  decomposition $F=F_e\, F_p$, we couple these energies with some isotropic elasto-plastic flow rules $F_p\,\frac{\rm d}{{\rm d t}}[F_p^{-1}]\in-\Partial \Chi(\dev_3 \Sigma_{e})$  defined  in the plastic distortion $F_p$,
where $\Partial \Chi$ is the subdifferential of the indicator function $\Chi$ of the convex elastic domain $\mathcal{E}_{\rm e}(W_{\rm iso},{\Sigma_{e}},\frac{1}{3}{\boldsymbol{\sigma}}_{\!\mathbf{y}}^2)$ in the mixed-variant $\Sigma_{e}$-stress space and $\Sigma_{e}=F_e^T D_{F_e} W_{\rm iso}(F_e)$. While $W_{_{\rm eH}}$ may loose ellipticity, we show that   loss of ellipticity is effectively prevented by the coupling with plasticity, since the ellipticity domain of $W_{_{\rm eH}}$ on the one hand, and the elastic domain in $\Sigma_{e}$-stress space on the other hand, are closely related. Thus the new formulation remains elliptic in elastic unloading at any given plastic predeformation. In addition, in this domain, the true-stress-true-strain relation remains monotone, as observed in experiments.
\\
\\
{\textbf{Key words:}  Hencky strain, logarithmic strain, natural strain, true strain, Hencky energy,   volumetric-isochoric split,
multiplicative decomposition, elasto-plasticity, bounded elastic distortions, ellipticity domain, return mapping algorithm,  finite strain plasticity, isotropic formulation,  9-dimensional flow rule, associated plasticity, subdifferential formulation, convex elastic domain, plastic spin}
\end{abstract}

\newpage

\tableofcontents

\section{Introduction}

\subsection{Preliminaries}

It is impossible to give an account of all works treating finite strain plasticity based in some way or another on the logarithmic strain space description. The logarithmic
description\footnote{According to Hanin and Reiner \cite[page 384]{hanin1956isotropic}: ``... there are problems in large plastic deformation. Here the
only adequate measure is the Hencky measure, because this is the only measure
in which the extensions form a group as can be seen from the relation
 $$\log \frac{\ell_3}{\ell_1} =\log\left( \frac{\ell_3}{\ell_2}\frac{\ell_2}{\ell_1}\right)=\log  \frac{\ell_3}{\ell_2}+\log\frac{\ell_2}{\ell_1}.$$
This property of forming a group is required in plasticity because in (ideal)
plasticity (as in viscosity) the amount of finite deformation reached at any
time is of no physical significance. As a matter of fact no definite meaning can
be attached to such deformation because while in elasticity there exists an
`unstrained state" to which the length $\ell_0$ is referred, no ``undeformed state" can
be defined. In plasticity, as in viscosity, the increase in length $d\ell$, which takes
place during the time-increment $dt$, can only be referred to the instantaneous
length $\ell$ so that the extension
$$
\varepsilon=\int_{\ell_{n}}^{\ell_{n+1}}\frac{dl}{l}=\log \frac{\ell_{n}}{\ell_{n+1}}
$$
which is Hencky's measure. At the same time, only in the Hencky measure can
the cubical dilatation be measured by the first invariant as can be seen from
$$
\varepsilon_v=\log \frac{V}{V_0}=\log\left(\frac{\ell_{i}}{\ell_{0i}}\frac{\ell_{j}}{\ell_{0j}}
\frac{\ell_{k}}{\ell_{0k}}\right)=\log\frac{\ell_{i}}{\ell_{0i}}+\log\frac{\ell_{j}}{\ell_{0j}}+\log \frac{\ell_{k}}{\ell_{0k}}
$$
so that the deviator is of physical significance; and plasticity relations must be
expressed in terms of deviators."}  is arguably the {simplest} approach to finite plasticity, suitable for the phenomenological description of isotropic
polycrystalline metals if the structure of geometrically linear theories {is} used with respect to the Lagrang{i}an logarithmic strain. In this paper we do not consider hypoelastic-plastic models \cite{ogden1970compressible,gurtin1983relationship,meyers2006choice,xiao1999existence} in which, contrary to hyperelastic models,
 the potential character of the elastic energy is ignored \cite{meyers1999consistency,muller2006thermodynamic}. Otherwise, they are simply the hyperelastic models
rewritten in a suitable incremental form. In case of the logarithmic rate, however, the hypo-elastic model integrates exactly to the well-known hyperelastic quadratic Hencky
model.

 In isotropic finite strain computational   hyperelasto-plasticity \cite{bertram2012elasticity,bertram1999alternative,glugegraphical,shutov2013analysis,shutov2008finite,reese2008finite,dettmer2004theoretical,Reese97a} the mostly used elastic energy is the quadratic Hencky logarithmic energy
\cite{Simo85,HutterSFB02,BatheEterovic,MieheApel,xiao2000consistent,zhu2014logarithmic,bruhns1999self,henann2009large} (see also \cite{heiduschke1995logarithmic,
sansour2001dual,sansour2003viscoplasticity,peric1992model,papadopoulos1998general,gabriel1995some,mosler2007variational})
\begin{align}
W_{_{\rm H}}(F_e):=\frac{\mu}{4\,}\|\dev_n \log C_e\|^2+\frac{\kappa}{8}\,[\tr( \log C_e)]^2={\mu}{}\|\dev_n \log U_e\|^2+\frac{\kappa}{2}\,[\tr( \log U_e)]^2,
\end{align}
where  $\mu>0$ is the infinitesimal shear modulus,
$\kappa>0$ is the infinitesimal bulk modulus, $C_e:=F_e^TF_e$ is the elastic right Cauchy-Green tensor,  $U_e$ the right
stretch tensor, i.e. the unique element of ${\rm PSym}(n)$ for which $U_e^2=C_e$  and
\begin{align}
F=F_e\cdot F_p
\end{align}
is the multiplicative decomposition of the deformation gradient \cite{kroner1955fundamentale,kroner1958kontinuumstheorie,kroner1959allgemeine,lee1969elastic,neff2009notes,Neff_Knees06,carstensen2002non,mielke2003energetic,mielke2002finite}. Here we have used  the Frobenius tensor norm
$\|{X}\|^2=\langle {X},{X}\rangle_{\R^{n\times n}}$, where $\langle {X},{Y}\rangle_{\R^{n\times n}}$ is the standard Euclidean scalar product on $\R^{n\times n}$. The identity tensor on $\R^{n\times n}$ will be denoted by $\id$, so that
$\tr{(X)}=\langle {X},{\id}\rangle$.

Among the works which use the Hencky strain in
elasto-plasticity we may also mention
\cite{simo1992algorithms,MieheApel,papadopoulos1998general,caminero2011modeling,peric1992model,peric1999new,dvorkin1994finite,geers2004finite}.
The expression $W_{_{\rm H}}$ is the energy considered by  J.C. Simo (see Eq. (3.4), page 147, from \cite{simo1993recent} and also \cite{armero1993priori})
because
\begin{align}
W_{_{\rm H}}(F_e):&=\mu\|\dev_n\log \sqrt{F_e^TF_e}\|^2+\frac{\kappa}{2}[\tr(\log \sqrt{F_e^TF_e})]^2=\frac{\mu}{4}\|\dev_n\log
{F_e^TF_e}\|^2+\frac{\kappa}{8}[\tr(\log F^T F)]^2\\
&=\frac{\mu}{4}\|\dev_n\log {F_e^TF_e}\|^2+\frac{\kappa}{2}[\log (\det F)]^2.\notag
\end{align}
As J.C. Simo already pointed out \cite[page 392]{Simo98a}, the Hencky energy  $W_{_{\rm H}}$ ``has the correct behaviour for extreme strains in the sense that" $W(F_e)\rightarrow\infty$ as $\det
F_e\rightarrow0$ and, likewise $W(F_e)\rightarrow\infty$ as $\det F_e\rightarrow\infty$, but $W_{_{\rm H}}$ ``is not a convex function of" $\det F_e$ ``and hence
$ W_{_{\rm H}}$ ``cannot be a polyconvex function of the deformation gradient [...]. Therefore, the stored energy function" $W_{_{\rm H}}$ ``cannot be accepted as a
correct model of elasticity for extreme strains. Despite this shortcoming, the model provides an excellent approximation for moderately large elastic
strains, vastly superior to the usual Saint-Venant-Kirchhoff model of finite elasticity\footnote{The isotropic  Saint-Venant-Kirchhoff elastic energy $W_{SV\!K}(F_e)
$ reads: $W_{SV\!K}(F_e)=\frac{\mu}{4}\|F_e^T\,F_e-\id\|^2+\frac{\lambda}{8}[\tr(F_e^T\,F_e-\id)]^2$ and does not satisfy the Baker-Ericksen-inequalities,
and is not separately convex. Therefore $W_{SV\!K}(F_e) $ is not rank-one convex \cite{Neff_Diss00,Raoult86,cism_book_schroeder_neff09}. Moreover, in the
neighbourhood of the identity it has the wrong nonlinear second order correction compared to all know experimental facts.  For this reason, $W_{SV\!K}(F_e)
$  is not a useful strain energy expression and should therefore be avoided in simulations.}. Furthermore, this
limitation has negligible practical implications in realistic models of classical plasticity, which are typically restricted to small elastic strains, and
is more than offset by the simplicity of the return mapping algorithm in stress space, which takes a format identical to that of the infinitesimal theory".
The last statement is the core argument why the Hencky energy is favoured in computational metal elasticity\footnote{We need to be a little more specific. For the additive model in the format $\|\log C-\log C_p\|^2$ the complete systems of equations of the plastic flow rule are {\bf identical} to the infinitesimal additive model, while for the truly multiplicative model the return mapping algorithm is {\bf similar} to the infinitesimal case. }.

  Several models of such a type have been considered in \cite{Neff_Wieners,krishnan2014polyconvex}. The decisive advantage of using the energy $W_{_{\rm H}}$ compared to other elastic
  energies stems from the fact that computational implementations of elasto-plasticity  \cite{gabriel1995some} based on the additive decomposition
  $\varepsilon=\varepsilon_e+\varepsilon_p$ in infinitesimal models \cite{neff2010parallel,ebobisse2010existence,neff2009numerical,neff2007numerical,chleboun16extension}, can
  be used with nearly no changes also in isotropic finite  strain problems \cite[page 392]{Simo98a}.

The computation of the elastic equilibrium at given plastic distortion $F_p$ suffers, however, under the well-known non-ellipticity\footnote{We know that
$W_{_{\rm H}}$ is LH-elliptic in a large neighbourhood of the identity if $\lambda,\mu>0$, $\lambda_i\in[0.21162...,1.39561...]$  (see \cite{Bruhns01,Bruhns02JE}),
therefore $F\mapsto W_{_{\rm H}}(F)$ is LH-elliptic for small elastic strains. Since the elasto-plastic model  should secure small elastic strains anyway it seems that the
non-ellipticity occurring for larger elastic  strains is not essential. However, in numerical FEM-implementation it is necessary to compute the so-called
elastic trial stress. The corresponding elastic trial deformation states $F_e$ may well be far outside the LH-ellipticity range. The model using the Hencky
energy $W_{_{\rm H}}$ cannot guarantee that  the computation of the elastic trial state is well-posed!} of $W_{_{\rm H}}$
\cite{Bruhns01,Balzani_Schroeder_Gross_Neff05,Neff_Diss00,Hutchinson82}.

 Recently, it has been discovered that the elastic Hencky energy does  have a fundamental differential geometric meaning, not shared by any other elastic energy. In fact
\begin{align*}
\widetilde{W}_{_{\rm H}}^{\rm iso}\left(\frac{U}{\det U^{1/n}}\right)={\rm dist}^2_{{\rm geod}}\left( \frac{F}{(\det F)^{1/n}}, {\rm SO}(n)\right)&={\rm dist}^2_{{\rm geod,{\rm SL}(n)}}\left( \frac{F}{(\det F)^{1/n}}, {\rm
SO}(n)\right)=\|\dev_n \log U\|^2 \,,\\
\widetilde{W}_{_{\rm H}}^{\rm vol}(\det U)={\rm dist}^2_{{\rm geod}}\left((\det F)^{1/n}\cdot \id, {\rm SO}(n)\right)&={\rm dist}^2_{{\rm geod,\mathbb{R}_+\cdot \id}}\left((\det F)^{1/n}\cdot \id,
\id\right)=|\log \det F|^2\,,
\end{align*}
where ${\rm dist}^2_{{\rm geod,\mathbb{R}_+\cdot \id}}$ and ${\rm dist}^2_{{\rm geod,{\rm SL}(n)}}$ are the canonical left invariant geodesic distances on
the Lie-group ${\rm SL}(n)$ and on the group $\mathbb{R}_+\cdot\id$, respectively (see
\cite{Neff_Osterbrink_Martin_hencky13,neff2013hencky,Neff_Nagatsukasa_logpolar13,RobertNeff}). For this investigation new mathematical tools had to be
discovered \cite{Neff_Nagatsukasa_logpolar13,LankeitNeffNakatsukasa} also having consequences for the classical polar  decomposition
\cite{jog2002explicit,jog2002foundations}. Here we adopt
the usual abbreviations of Lie-group theory and we let $\Sym(n)$ and $\rm PSym(n)$ denote the symmetric and positive definite symmetric tensors respectively. We  denote by $C=F^T F$ the right Cauchy-Green  tensor,  $U$ the right
stretch tensor, $B=F\, F^T$ the left Cauchy-Green (or Finger)  tensor,  and by  $V$ the  left stretch tensor.

In the remaining part of this paper, after a paragraph giving some information on the results obtained for the exponentiated Hencky energy,  we will  consider the coupling to finite plasticity based on a 9-dimensional flow rule \cite{steigmann2011mechanically,gupta2011aspects}.

\subsection{The exponentiated Hencky energy}

With a view to overcome the shortcomings of the quadratic Hencky energy, in a previous work \cite{NeffGhibaLankeit} we have modified the Hencky energy and we considered
\begin{align}\label{thdefHen}\hspace{-2mm}
 W_{_{\rm eH}}(F)=W_{_{\rm eH}}^{\text{\rm iso}}(\frac F{\det F^{\frac{1}{n}}})+W_{_{\rm eH}}^{\text{\rm vol}}(\det F^{\tel n}\cdot \id) = \left\{\begin{array}{lll}
\dd\frac{\mu}{k}\,e^{k\,\|{\rm dev}_n\log U\|^2}+\frac{\kappa}{2\widehat{k}}\,e^{\widehat{k}\,[(\log \det U)]^2}&\text{if}& \det\, F>0,\vspace{2mm}\\
+\infty &\text{if} &\det F\leq 0\,.
\end{array}\right.
\end{align}
We have called this the {\bf exponentiated Hencky energy}. For the  two-dimensional situation $n=2$ and for  $\mu>0, \kappa>0$,   we have established that the functions $W_{_{\rm eH}}:\R^{n\times n}\to \overline{\R}_+$ from the family of exponentiated Hencky type
energies are {\bf rank-one convex} \cite{NeffGhibaLankeit} for  $k\geq\dd\frac{1}{4}$ and $\widehat{k}\dd\geq \tel8$, while they are  {\bf polyconvex} \cite{NeffGhibaPoly} for  $k\geq\dd\frac{1}{3}$ and $\widehat{k}\dd\geq \tel8$.

Regarding the three-dimensional case we have proved \cite{NeffGhibaLankeit} that, for all $k>0$, the function
 $
F\mapsto e^{k\,\|\dev_3\log U\|^2},$ $ F \in{\rm GL}^+(3)
$
is not rank-one convex.
However, in the next section we will discuss an interesting relation between non-ellipticity of $ W_{_{\rm eH}}$ in three-dimensions and finite plasticity models.

We note that the  {\bf Kirchhoff stress tensor} $\tau_{_{\rm eH}}$ corresponding to the exponentiated energies is given \cite{Ogden83} by
 \begin{align}\label{exptau}
 \tau_{_{\rm eH}}=2\,{\mu}\,e^{k\,\|\dev_3\log\,V\|^2}\cdot \dev_3\log\,V+{\kappa}\,e^{\widehat{k}\,[\tr(\log V)]^2}\,\tr(\log V)\cdot \id,
 \end{align}
 while the \textbf{Cauchy stress tensor} is given by
 $$
 \sigma_{_{\rm eH}}=e^{-\tr(\log  V)}\cdot \tau_{_{\rm eH}}.
$$
 Both tensors $\sigma_{_{\rm eH}}$ and $\tau_{_{\rm eH}}$ differ from their classical Hencky-counterparts $\sigma_{_{\rm H}}$ and $\tau_{_{\rm H}}$ only by some nonlinear scalar factors.
 Moreover, by orthogonal projection onto the Lie-algebra $\mathfrak{sl}(3)$ and $\R\cdot \id$, respectively, we find
 \begin{align}
 \dev_3\sigma_{_{\rm eH}}&=e^{-\tr(\log  V)}\, \dev_3\sigma_{_{\rm H}},\qquad
 \tr(\sigma_{_{\rm eH}})=e^{-\tr(\log  V)}\, \tr(\sigma_{_{\rm H}}).
\end{align}
Therefore, the deviatoric part of the Cauchy stress $\dev_3\sigma_{_{\rm eH}}$ and the trace of the Cauchy stress
$\tr(\sigma_{_{\rm eH}})$ are in a simple relation with the corresponding quantities for the quadratic Hencky energy $W_{_{\rm H}}$. Hence, the change of a given FEM-implementation
of $W_{_{\rm H}}$ into $W_{_{\rm eH}}$ is nearly free of costs \cite{tanaka1979finite,masud1997finite,naghdabadi2012viscoelastic,heiduschke1996computational}.

We also need to introduce the convex  elastic domain in the {Kirchhoff-stress space}
\begin{align}\label{contintau2}
 \mathcal{E}_{\rm e}(\tau_{_{\rm e}},\frac{2}{3}\, {\boldsymbol{\sigma}}_{\!\mathbf{y}}^2):=\left\{ \tau_e\in {\rm Sym}(3) \big|\,\ \|\dev_3
 \tau_e\|^2\leq\frac{2}{3}\, {\boldsymbol{\sigma}}_{\!\mathbf{y}}^2\right\}\subset  {\rm Sym}(3).
 \end{align}
 Incidentally, the set $\mathcal{E}_{\rm e}(\tau_{_{\rm eH}},\frac{2}{3}\, {\boldsymbol{\sigma}}_{\!\mathbf{y}}^2)$ coincides with the set considered in the study   of the monotonicity properties of the map $\log U\mapsto \sigma_{_{\rm eH}}(\log U)$ which we have called the true-stress-true-strain (TSTS-M$^+$) monotonicity condition \cite{NeffGhibaLankeit,jog2013conditions}. The
    monotonicity of the Cauchy stress tensor as a function of $\log B$ or $\log V$ means
  \begin{align}\label{Jogines1}
 \langle\sigma(\log B_1)-\sigma(\log B_2),\log B_1-\log B_2\rangle> 0, \qquad \forall\, B_1, B_2\in \PSym^+(3), \ B_1\neq B_2,
 \end{align}
 which implies the \textbf{true-stress-true-strain-invertibility} (TSTS-I), i.e. the invertibility of  the map $\log B\mapsto \sigma(\log B)$.
  This  means that for our $W_{_{\rm eH}}$-formulation, the {\bf true-stress-true-strain}  relation is {\bf monotone inside the elastic domain} $\mathcal{E}_{\rm e}(\tau_{_{\rm eH}},\frac{2}{3}\, {\boldsymbol{\sigma}}_{\!\mathbf{y}}^2)$. This is a feature of $W_{_{\rm eH}}$ not shared  with any other known elastic energy. For more constitutive issues regarding the interesting properties of $W_{_{\rm eH}}$ we refer the reader to \cite{NeffGhibaLankeit}.

\section{Multiplicative isotropic elasto-plasticity directly in terms \\ of the non-symmetric plastic distortion $F_p$}\label{plastSect}
\setcounter{equation}{0}

In planar elasto-plasticity\footnote{In order to model plane strain with this model, the coefficient $\kappa$ has to be modified in order to be consistent with plane strain linear elasticity in the infinitesimal limit, see \eqref{3.2}.} our development  suggests to replace the energy $W_{_{\rm H}}$ by
\begin{align}
W_{_{\rm eH}}(F_e):&=\frac{\mu}{k}\, e^{\frac{k}{4}\|\dev_2 \log C_e\|^2}+\frac{\kappa}{2\widehat{k}}\,e^{\frac{\widehat{k}}{4}\,[\tr( \log C_e)]^2}\\
&=\frac{\mu}{k}\, e^{\frac{k}{4}\|\dev_2 \log F_p^{-T}F^T FF_p^{-1}\|^2}+\frac{\kappa}{2\widehat{k}}\,e^{\frac{\widehat{k}}{4}\,[\tr( \log
F^TF)]^2}=\frac{\mu}{k}\, e^{\frac{k}{4}\|\dev_2 \log F_p^{-T}F^T FF_p^{-1}\|^2}+\frac{\kappa}{2\widehat{k}}\,e^{\widehat{k}\,[ \log(\det F)]^2},\notag
\end{align}
 where  we have imposed the condition of plastic incompressibility $\det F_p=1$. Let us remark that for small deformations
\begin{align}\label{3.2}
W_{_{\rm eH}}(F_e)&=\frac{\mu}{k}\, e^{{k}\|\dev_2 \log U_e\|^2}+\frac{\kappa}{2\widehat{k}}\,e^{{\widehat{k}}\,[\tr( \log U_e)]^2}\notag={\mu}\, \|\dev_2 \log U_e\|^2+\frac{\kappa}{2}\,{\,[\tr( \log U_e)]^2}+\,\text{h.o.t.}\\
&= {\mu}\, \| \log U_e\|^2+\frac{\kappa-\mu}{2}\,{[\tr( \log U_e)]^2}+\,\text{h.o.t.}= {\mu}\, \| \varepsilon_e\|^2+\frac{\kappa-\mu}{2}\,{\,[\tr(\varepsilon_e)]^2}+\,\text{h.o.t.},
\end{align}
where  $\varepsilon_e$ is the symmetric plastic strain. A direct identification of the constitutive coefficients gives us that
\begin{align}
\mu=\mu_{_{\rm 3D}}, \qquad  \kappa-\mu=\lambda_{_{\rm 3D}}.
\end{align}
  \begin{lemma}{\rm (rank-one convexity and multiplicative decomposition)}\label{newlemma}
  If  the elastic energy $F\mapsto W(F)$ is rank-one convex,  it follows that the elasto-plastic formulation
\begin{align}
F\mapsto W(F,F_p):={W}(F\,F_p^{-1})={W}(F_e)
\end{align}
 remains rank-one convex w.r.t $F$ \cite{HutterSFB02,Neff00b,LNPzamp2013} at given plastic distortion $F_p$.
 \end{lemma}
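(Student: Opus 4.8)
The plan is to apply the definition of rank-one convexity directly, exploiting the fact that right-multiplication by a fixed invertible matrix preserves rank-one structure. Recall that $W$ being rank-one convex means that for every base point $G\in\R^{n\times n}$ and every rank-one direction $a\otimes b=a\,b^T$, the scalar map $t\mapsto W(G+t\,a\otimes b)$ is convex. First I would fix the plastic distortion $F_p$, which is invertible (indeed $\det F_p=1$ by plastic incompressibility, so $F_p\in\GL^+(n)$), and set $\widetilde{W}(F):=W(F\,F_p^{-1})$. To test rank-one convexity of $\widetilde W$, I would pick an arbitrary base point $F$ and an arbitrary rank-one direction $a\otimes b$, and study $g(t):=\widetilde W(F+t\,a\otimes b)=W\bigl((F+t\,a\otimes b)\,F_p^{-1}\bigr)$.

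The decisive step is the algebraic identity $(a\otimes b)\,F_p^{-1}=a\,(F_p^{-T}b)^T=a\otimes(F_p^{-T}b)$, which shows that post-composition with the fixed invertible matrix $F_p^{-1}$ carries rank-one matrices to rank-one matrices. Distributing the product then gives $g(t)=W\bigl(F\,F_p^{-1}+t\,a\otimes(F_p^{-T}b)\bigr)$. Here $F\,F_p^{-1}$ is a fixed base point and $a\otimes(F_p^{-T}b)$ is again a rank-one direction, so the assumed rank-one convexity of $W$ yields convexity of $t\mapsto W\bigl(F\,F_p^{-1}+t\,a\otimes(F_p^{-T}b)\bigr)$, i.e. convexity of $g$. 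Since $F$, $a$ and $b$ were arbitrary, $\widetilde W$ is rank-one convex with respect to $F$ at the given $F_p$, which is the assertion.

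There is essentially no hard step here: the only thing that must be checked is that rank-one directions are preserved under the linear map $X\mapsto X\,F_p^{-1}$, and this is immediate from the invertibility (hence rank-preservation) of $F_p^{-1}$. The one point meriting care is the extended-real-valued nature of the energy, since $W_{_{\rm eH}}$ equals $+\infty$ on $\{\det F\le 0\}$. I would note that $\det(F\,F_p^{-1})>0$ is equivalent to $\det F>0$ because $\det F_p^{-1}=1>0$; thus the effective domain transforms consistently under $F\mapsto F\,F_p^{-1}$, and the convexity of $g$ along each admissible line holds in the extended sense (with the value $+\infty$ beyond the boundary $\det F=0$), so no separate argument for the infinite branch is needed.
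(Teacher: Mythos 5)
Your proof is correct, and it rests on the same algebraic observation as the paper's: right-multiplication by the fixed invertible matrix $F_p^{-1}$ carries rank-one matrices to rank-one matrices, via $(a\otimes b)\,F_p^{-1}=a\otimes(F_p^{-T}b)$, which is exactly the paper's substitution $\widehat{\eta}=F_p^{-T}\eta$. The difference is the level at which this fact is deployed. The paper verifies the Legendre--Hadamard condition by computing the second derivative, $D_F^2[W(F\,F_p^{-1})].\,(\xi\otimes\eta,\xi\otimes\eta)=D_{F_e}^2[W(F_e)].\,(\xi\otimes\widehat{\eta},\xi\otimes\widehat{\eta})$, which tacitly presupposes that $W$ is twice differentiable at the points in question; you instead invoke the definition of rank-one convexity directly, showing that $t\mapsto W\bigl(F\,F_p^{-1}+t\,a\otimes(F_p^{-T}b)\bigr)$ is convex as a composition of a rank-one line with $W$. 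Your definitional route is slightly more general: it requires no smoothness, and it handles the extended-real-valued setting cleanly, which matters here because $W_{_{\rm eH}}=+\infty$ on $\{\det F\leq 0\}$, where a second-derivative computation is meaningless; your remark that $\det(F\,F_p^{-1})>0$ if and only if $\det F>0$ (since $\det F_p>0$, indeed $\det F_p=1$ under plastic incompressibility) shows the effective domain transforms consistently, so convexity along rank-one lines holds in the extended sense. Conversely, the paper's second-derivative check is the customary one-line verification when $W$ is smooth on ${\rm GL}^+(n)$ and makes the constitutive invariance of the acoustic tensor explicit. Both arguments deliver the same conclusion; yours buys generality and regularity-robustness at essentially no extra cost.
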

 \begin{proof}
 This is clear, because
 \begin{align}
 D_F^2[W(F\,F_p^{-1})].\,(\xi\otimes\eta,\xi\otimes\eta)=D_{F_e}^2[W(F_e)].\,((\xi\otimes {\eta})F_p^{-1},(\xi\otimes
 {\eta})F_p^{-1})=D_{F_e}^2[W(F_e)].\,(\xi\otimes \widehat{\eta},\xi\otimes \widehat{\eta}),\notag
 \end{align}
 where $\widehat{\eta}=F_p^{-T}\eta$.
\end{proof}
 \begin{remark}
 The same constitutive invariance property is true for convexity, polyconvexity and quasiconvexity \cite{HutterSFB02,krishnan2014polyconvex}.
 \end{remark}
Therefore,  the multiplicative approach is ideally suited as far as preservation of ellipticity properties for elastic unloading is concerned. Note that this feature is not true for some additive approaches, see \cite{NeffGhibaAdd}.

\begin{definition}{\rm (reduced dissipation inequality-thermodynamic consistency)}
We say that the reduced dissipation inequality along the plastic evolution is satisfied if and only if
\begin{align}
\frac{\rm d}{\rm dt}[W(F\,F_p^{-1}(t)]\leq 0,
\end{align}
for all constant in time $F$.
\end{definition}

Let us further remark that for fixed $F$ and for an energy for which the decomposition into isochoric and volumetric parts
\begin{align}
W=W_{\rm iso} (F_e)+W_{\rm vol}(F_e)=W_{\rm iso} (F F_p^{-1})+W_{\rm vol}(F)
\end{align}
holds true, in view of  Sansour's result \cite{sansour2008physical} (see also \cite{miehe2014variational1,miehe2014variational,miehe1994representation} and \cite[page
305]{Simo98b}), we have for the reduced dissipation inequality
\begin{align}\label{dissinegW}
 \frac{\rm d}{{\rm d t}}[W_{\rm iso}(F F^{-1}_p)]&=\langle D_{F_e} W_{\rm iso}(F_e),F\frac{\rm d}{\rm dt}[F^{-1}_p]\rangle=\langle D_{F_e} W_{\rm
 iso}(F_e),FF^{-1}_pF_p\frac{\rm d}{\rm dt}[F^{-1}_p]\rangle
\\
&=\langle F_e^T D_{F_e} W_{\rm iso}(F_e),F_p\frac{\rm d}{\rm dt}[F^{-1}_p]\rangle=\langle \Sigma_{e},F_p\frac{\rm d}{\rm dt}[F^{-1}_p]\rangle=-\langle
\Sigma_{e},\frac{\rm d}{\rm dt}[F_p]F_p^{-1}\rangle\leq 0,\notag
\end{align}
where $$\Sigma_{e}=F_e^T D_{F_e} W_{\rm iso}(F_e)=2\,C_e\,D_{ C_e}[\widehat{W}_{\rm iso}(C_e)]=F_e^T \tau_e F_e^{-T}$$ is the mixed variant (transformed)
Kirchhoff tensor
and $$\tau_e:=2\,D_{B_e}[\widetilde{W}_{\rm iso}(B_e)]B_e=2\,D_{\log B_e}[\widetilde{W}_{\rm iso}(\log B_e)]=
D_{\log V_e}[{W}_{\rm iso}(\log V_e)]$$ is the elastic Kirchhoff stress-tensor. Note that $\Sigma_{e}$ is symmetric in case of elastic isotropy, while
$\tau_e$ is always symmetric. The tensor $\Sigma=C\cdot S_2(C)$, where  $S_2=2\,D_C[W(C)]$ is
 the second  Piola-Kirchhoff stress tensor, is sometimes called the \textbf{Mandel stress tensor}  and $\dev_3 \Sigma_{e}=\dev_3 \Sigma_{\rm E}$, where  $\Sigma_{\rm E}$ is the elastic
\textbf{Eshelby tensor}
$$
\Sigma_{\rm E}:=F_e^TD_{F_e}[W({F_e})]-W(F_e)\cdot \id=D_{\log C_e}[\overline{W}(\log C_e)]-\overline{W}(\log C_e)\cdot \id,
$$
 driving the plastic evolution (see e.g. \cite{gupta2007evolution,neff2009notes,maugin1994eshelby,cleja2000eshelby,cleja2003consequences,cleja2013orientational}).

A  simple thermodynamically admissible perfect plasticity model \cite[page 67]{Miehe92}(see also
\cite{Simo98b,Neff01d,Neff01c,Neff_Cosserat_plasticity05,neff2009notes}) is obtained by defining the plastic evolution
\begin{equation}\label{choicchi}
F_p\,\frac{\rm d}{{\rm d t}}[F_p^{-1}]=-\frac{\rm d}{{\rm d t}}[F_p]\,F_p^{-1}\in-\Partial \Chi(\dev_3 \Sigma_{e}),
\end{equation}
where $\Partial \Chi$ is the subdifferential of the indicator function $\Chi$ of the convex elastic domain
 \begin{equation}
\mathcal{E}_{\rm e}({\Sigma_{e}},\frac{1}{3}{\boldsymbol{\sigma}}_{\!\mathbf{y}}^2):=\{\Sigma_{e}\in {\rm Sym}(3)\,|\, \|\dev_3\Sigma_{e}\|^2\leq
\frac{1}{3}{\boldsymbol{\sigma}}_{\!\mathbf{y}}^2\}
\end{equation}
in the mixed-variant $\Sigma_{e}$-stress space\footnote{The choice of $\frac{1}{3}$ in versus $\frac{2}{3}$ is not accidentally, see Lemma \ref{lemmaplast}.}.

The choice \eqref{choicchi} ensures
$
\frac{\rm d}{{\rm d t}}[W_{\rm iso}(F F^{-1}_p)]\leq 0
$
 at fixed $F$, therefore the reduced dissipation inequality \eqref{dissinegW} is satisfied and the deviatoric formulation  together with the use of  $F_p\,\frac{\rm d}{{\rm d t}}[F_p^{-1}]=-\frac{\rm d}{{\rm d t}}[F_p]\, F_p^{-1}$ as conjugate variable guarantees $\det F_p=1$.

\medskip

 Next, a (for us at first surprising) algebraic estimate is introduced.

\begin{lemma}\label{lemmaplast}
Let $F_e\in {\rm GL}^+(3)$ be given. Then it holds $ \|F_e^TSF_e^{-T}\|^2\geq \frac{1}{2}\|S\|^2$ for all  $S\in {\rm Sym}(3)$, the constant being independent of $F_e$.
\end{lemma}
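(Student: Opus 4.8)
The plan is to exploit the orthogonal invariance of the Frobenius norm in order to reduce the statement to the case where $F_e$ is a positive diagonal matrix, where the inequality becomes a one-line consequence of the scalar estimate $t+t^{-1}\geq 2$. First I would invoke the polar decomposition $F_e=R\,U$ with $R\in\SO(3)$ and $U\in\PSym(3)$. Since $F_e^T=U\,R^T$ and $F_e^{-T}=R\,U^{-1}$, one obtains $F_e^TSF_e^{-T}=U\,(R^TSR)\,U^{-1}$. Writing $\widetilde S:=R^TSR$, which is again symmetric and satisfies $\|\widetilde S\|=\|S\|$ because conjugation by an orthogonal matrix preserves the Frobenius norm, the claim reduces to $\|U\,\widetilde S\,U^{-1}\|^2\geq\tfrac12\|\widetilde S\|^2$ for an arbitrary $U\in\PSym(3)$ and an arbitrary $\widetilde S\in\Sym(3)$.

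Next I would diagonalize $U=O\,D\,O^T$ with $O\in\SO(3)$ and $D=\diag(\lambda_1,\lambda_2,\lambda_3)$, $\lambda_i>0$. Conjugating once more and again using orthogonal invariance, $\|U\widetilde SU^{-1}\|=\|D\,\widehat S\,D^{-1}\|$ with $\widehat S:=O^T\widetilde S O\in\Sym(3)$ and $\|\widehat S\|=\|S\|$. Thus everything is reduced to a purely diagonal computation: entrywise $(D\,\widehat S\,D^{-1})_{ij}=(\lambda_i/\lambda_j)\,\widehat S_{ij}$, so
\[
\|D\widehat S D^{-1}\|^2=\sum_{i,j}\frac{\lambda_i^2}{\lambda_j^2}\,\widehat S_{ij}^2 .
\]

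The decisive step is then to compare this sum with $\|\widehat S\|^2=\sum_{i,j}\widehat S_{ij}^2$ term by term. The diagonal entries ($i=j$) carry the factor $1$ and contribute equally to both sides. For each off-diagonal pair $\{i,j\}$ with $i\neq j$, the symmetry $\widehat S_{ij}=\widehat S_{ji}$ lets me group the two mixed entries and bound them by AM--GM,
\[
\Big(\frac{\lambda_i^2}{\lambda_j^2}+\frac{\lambda_j^2}{\lambda_i^2}\Big)\,\widehat S_{ij}^2\geq 2\,\widehat S_{ij}^2,
\]
since $t+t^{-1}\geq 2$ for every $t=\lambda_i^2/\lambda_j^2>0$. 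Summing the diagonal and off-diagonal contributions yields in fact the \emph{sharper} bound $\|F_e^TSF_e^{-T}\|^2=\|D\widehat S D^{-1}\|^2\geq\|\widehat S\|^2=\|S\|^2$, so the stated estimate with constant $\tfrac12$ holds comfortably, and the constant is manifestly independent of $F_e$ because it arises only from the ratios $\lambda_i/\lambda_j$, which enter symmetrically.

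I do not expect a genuine obstacle here; the only point requiring care --- and the reason the constant can be taken uniform in $F_e$ --- is that the symmetry $S=S^T$ is essential. For a general non-symmetric $S$ the off-diagonal entries $\widehat S_{ij}$ and $\widehat S_{ji}$ decouple, and letting $\lambda_i/\lambda_j\to\infty$ drives $\|F_e^TSF_e^{-T}\|$ to zero while $\|S\|$ stays fixed, so no positive constant can survive without the symmetry hypothesis. This is precisely why the lemma is (and must be) stated for $S\in\Sym(3)$.
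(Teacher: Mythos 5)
Your proof is correct, and at its core it is the same argument as the paper's: reduce by orthogonal invariance to conjugation of a symmetric matrix by a positive diagonal matrix, then control the off-diagonal factors $\frac{d_i}{d_j}+\frac{d_j}{d_i}$. The packaging differs only superficially: you use the polar decomposition $F_e=R\,U$ followed by diagonalization of $U$, while the paper works with the left Cauchy--Green tensor $B_e=F_eF_e^T$ via the identity $\|F_e^TSF_e^{-T}\|^2=\langle B_eS,SB_e^{-1}\rangle$ and then diagonalizes $B_e$; since $B_e=R\,U^2R^T$, this is the same reduction, your ratios $\lambda_i^2/\lambda_j^2$ being exactly the paper's $d_i/d_j$. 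The one substantive difference lies in the final scalar step: the paper invokes only the crude bound $\frac{d_i}{d_j}+\frac{d_j}{d_i}\geq 1$, which delivers precisely the stated constant $\frac{1}{2}$, whereas your AM--GM bound $t+t^{-1}\geq 2$ yields the sharper conclusion $\|F_e^TSF_e^{-T}\|^2\geq\|S\|^2$, showing that the lemma's constant $\frac{1}{2}$ is not optimal; your constant $1$ is attained at $F_e=\id$ and is consistent with Remark \ref{remarktauesigmae}, where equality $\|\dev_3\Sigma_e\|=\|\dev_3\tau_e\|$ is established in the isotropic case (the paper only ever needs $\frac{1}{2}$ for its choice of the factors $\frac{1}{3}$ versus $\frac{2}{3}$ in the elastic domains). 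Your closing observation that symmetry of $S$ is essential --- a single off-diagonal entry, say $S=e_1\otimes e_2$, has conjugated norm $\lambda_1/\lambda_2\,|S_{12}|$, which can be made arbitrarily small --- correctly anticipates the remark the paper places immediately after the lemma, namely that for $S\not\in\Sym(3)$ only an $F_e$-dependent constant $c(F_e)$ survives.
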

\begin{proof}
Let us define the left Cauchy-Green tensor $B_e=F_eF_e^{T}\in {\rm PSym}(3)$. We have
\begin{equation}
\|F_e^TSF_e^{-T}\|^2=\langle F_e^TSF_e^{-T},F_e^TSF_e^{-T}\rangle=\langle F_e F_e^TS,SF_e^{-T}F_e^{-1}\rangle=\langle B_eS,SB_e^{-1}\rangle.
\end{equation}
Since $B_e=F_eF_e^{T}\in {\rm PSym}(3)$, there is $Q\in {\rm SO}(3)$ such that $D_e=Q^TB_e Q=\diag (d_1,d_2,d_3)$. Hence, we obtain
\begin{equation}
\|F_e^TSF_e^{-T}\|^2=\langle Q^TD_eQS,S Q^T D_e^{-1}Q\rangle=\langle D_eQSQ^T,QS Q^T D_e^{-1}\rangle.
\end{equation}
Moreover,  considering $QSQ^T:=\widehat{S}=\left(
                         \begin{array}{ccc}
                           \widehat{S}_{11} & \widehat{S}_{12}& \widehat{S}_{13} \\
                           \widehat{S}_{12} & \widehat{S}_{22} & \widehat{S}_{23} \\
                           \widehat{S}_{13} & \widehat{S}_{23} & \widehat{S}_{33} \\
                         \end{array}
                       \right)$ and using that $1\leq\frac{d_i}{d_j}+\frac{d_j}{d_i}$ for $d_i,d_j>0$, we deduce
                   \begin{align}
\|F_e^TSF_e^{-T}\|^2&=\widehat{S}_{11}^2+\widehat{S}_{22}^2+\widehat{S}_{33}^2+\widehat{S}_{12}^2\left(\frac{d_1}{d_2}+\frac{d_2}{d_1}\right)
+\widehat{S}_{23}^2\left(\frac{d_2}{d_3}+\frac{d_3}{d_2}\right)
+\widehat{S}_{13}^2\left(\frac{d_1}{d_3}+\frac{d_3}{d_1}\right)\geq\frac{1}{2}\|\widehat{S}\|^2=\frac{1}{2}\|{S}\|^2,\notag
\end{align}
due to the symmetry of $\widehat{S}$ and the proof is complete.
\end{proof}

\begin{remark}
Note that for $S\not\in {\rm Sym}(3)$ there is always an  estimate similar to that given by Lemma
\ref{lemmaplast} but which involves constants depending on $F_e$, i.e. $\|F_e^TSF_e^{-T}\|^2\geq c(F_e)\,\|S\|^2$.
\end{remark}
\begin{remark}
It is easy to see  the relations
\begin{align}\label{7.80} \Sigma_{e}=F_e^T\tau_e \, F_e^{-T}.
\end{align}
Note that \eqref{7.80}, as opposed to appearance,  is not at variance with symmetry of  $\Sigma_{e}$ in case of isotropy.
\end{remark}

 \begin{remark}
Since
\begin{align}
\dev_3 \Sigma_{e}=\dev_3(F_e^T \tau_eF_e^{-T})&=F_e^T \tau_eF_e^{-T}-\frac{1}{3}{\rm tr}(F_e^T \tau_eF_e^{-T})\cdot \id=
F_e^T (\tau_e-\frac{1}{3}{\rm tr} (\tau_e))\cdot \id)F_e^{-T},
\end{align}
we can note
\begin{align}
\dev_3 \Sigma_{e}=F_e^T (\dev_3\tau_e)F_e^{-T}, \qquad \dev_3 \tau_e=F_e^{-T} (\dev_3\Sigma_{e})F_e^{T}, \qquad {\rm tr}(\Sigma_{e})={\rm tr}(\tau_e).
\end{align}
\end{remark}
Thus, using Lemma \ref{lemmaplast} we obtain the estimate
\begin{equation}
 \|\dev_3 \Sigma_{e}\|=\|F_e^T(\dev_3\tau_e)F_e^{-T}\|\geq \frac{1}{\sqrt{2}}\|\dev_3\tau_e\|,
\end{equation}
which   { is valid for general anisotropic materials and it} explains our choice of factors in $\mathcal{E}_{\rm e}({\Sigma_{_{\rm eH}}},\frac{1}{3}{\boldsymbol{\sigma}}_{\!\mathbf{y}}^2)$ and $\mathcal{E}_{\rm e}({\tau_{_{\rm eH}}},\frac{2}{3}{\boldsymbol{\sigma}}_{\!\mathbf{y}}^2)$, respectively. Moreover, numerical tests suggest that the LH-ellipticity domain of the distortional energy function
$
F\mapsto W_{_{\rm eH}}^{\rm iso}(F)=\frac{\mu}{k}\,e^{k\,\|\dev_3\log U\|^2}$, $ F \in{\rm GL}^+(3),
$
with $k\geq \frac{3}{16}$ (the necessary  condition for separate convexity  (SC) of $e^{k\,\|\dev_3\log U\|^2}$ in 3D) is  an extremely  large cone
\begin{align}
\mathcal{E}(W_{_{\rm eH}},{\rm LH},U, 27)=\{U\in{\rm PSym}(3)\,\big|\, \|\dev_3\log U\|^2<27\};
\end{align}
Therefore  we have the inclusion of domains
$$\mathcal{E}_{\rm e}({\Sigma_{_{\rm eH}}},\frac{1}{3}{\boldsymbol{\sigma}}_{\!\mathbf{y}}^2)
    \overset{{\rm Lemma \,\ref{lemmaplast}}}{\subseteq}\mathcal{E}_{\rm e}(\tau_{_{\rm eH}},\frac{2}{3}\, {\boldsymbol{\sigma}}_{\!\mathbf{y}}^2)\ \, \!\!\!\! \!\!\!\!\underbrace{\subseteq}_{\text{numerical tests}} \!\!\!\! \!\!\!\!\ \,  \mathcal{E}(W_{_{\rm eH}},{\rm LH},U, 27).
    $$

\begin{figure}[h!]
\centering
\begin{minipage}[h]{0.7\linewidth}
\includegraphics[scale=0.75]{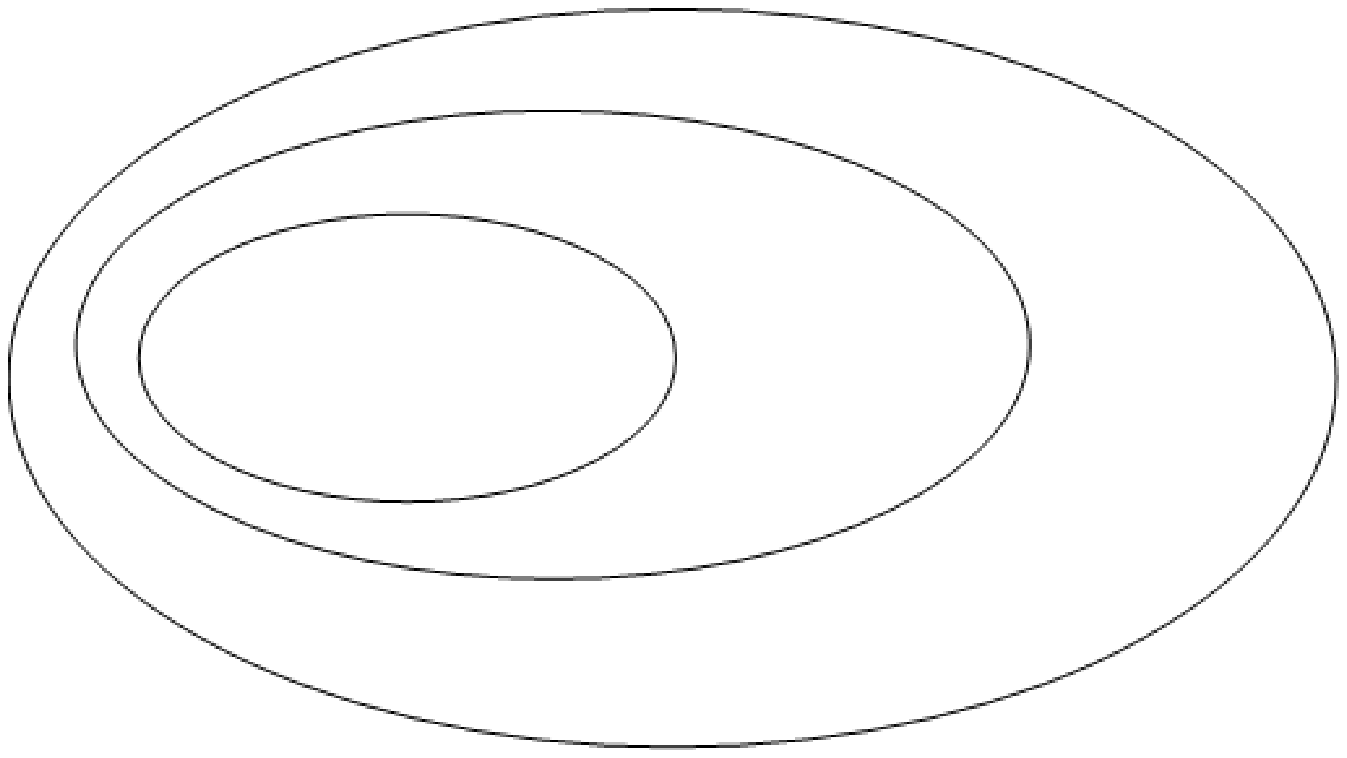}
\put(-95,76){\tiny $\mathcal{E}(W_{_{\rm eH}},{\rm LH},U, 27)$}
\put(-95,68){\tiny\textbf{ellipticity domain} }
\put(-95,60){\tiny of $W_{_{\rm eH}}$}
\put(-163,90){\tiny $\mathcal{E}_{\rm e}(\tau_{_{\rm eH}},\frac{2}{3}\, {\boldsymbol{\sigma}}_{\!\mathbf{y}}^2)$}
\put(-163,82){\tiny \textbf{monotonicity }}
\put(-163,76){\tiny \textbf{domain of} $W_{_{\rm eH}}$}
\put(-250,100){\tiny $\mathcal{E}_{\rm e}({\Sigma_{_{\rm eH}}},\frac{1}{3}{\boldsymbol{\sigma}}_{\!\mathbf{y}}^2)$}
\put(-250,92){\tiny \textbf{elastic domain of} $W_{_{\rm eH}}$}
\centering
\caption{\footnotesize{Elastic domains expressed in the mixed  variant symmetric stress tensor ${\Sigma_{_{\rm eH}}}$  and  the symmetric Kirchhoff stress tensor ${\tau_{_{\rm eH}}}$ related to the ellipticity domain $\mathcal{E}(W_{_{\rm eH}},{\rm LH},U, 27)$. }}
\label{grafic-plasth}
\end{minipage}
\end{figure}%

    In  \eqref{contintau2} the   considered convex ``elastic domain", in which monotonicity and/{or}
ellipticity for $W_{_{\rm eH}}$ is considered, is defined in terms of $\|\dev_3\tau_{_{\rm eH}}\|$, not in terms of $\|\dev_3\Sigma_{_{\rm eH}}\|$.
 However, adapting Lemma
\ref{lemmaplast} to $W_{_{\rm eH}}$, we see that, in the three-dimensional case, our previous results indicating the loss of ellipticity only for extreme
distortional strains suggest that the coupling with plasticity ist most natural: permanent deformation sets in, based on a criterion of distortional energy
(Huber-Hencky-von Mises-type) $\|\dev_3 \Sigma_{_{\rm eH}}\|^2\leq \frac{1}{3}{\boldsymbol{\sigma}}_{\!\mathbf{y}}^2$, and our former results suggest that
$W_{_{\rm eH}}(F_e)$ never reaches the non-elliptic domain in any elasto-plastic process. This is in sharp contrast to the loss of ellipticity of  the quadratic  Hencky energy $W_{_{\rm H}}$, which is
not related to the distortional energy alone. As it turns out, for the overall non-elliptic energy $W_{_{\rm eH}}$ (in three dimensions) plasticity provides a natural relaxation mechanism, which prevents loss of ellipticity in the elastic domain. Moreover, in the above defined elastic domain
$\mathcal{E}_{\rm e}({\Sigma_{_{\rm eH}}},\frac{1}{3}{\boldsymbol{\sigma}}_{\!\mathbf{y}}^2)$, the constitutive relation  $\log B_e\mapsto\sigma(\log B_e)$
remains monotone, i.e. the true-stress-true-strain monotonicity condition (TSTS-M$^+$) is satisfied in $\mathcal{E}_{\rm e}({\Sigma_{_{\rm eH}}},\frac{1}{3}{\boldsymbol{\sigma}}_{\!\mathbf{y}}^2)$ (see \cite{NeffGhibaLankeit,vallee1978lois,vallee2008dual,jog2013conditions}).
  {\begin{remark}\label{remarktauesigmae}
For the isotropic case we have $\tau_e\, B_e=B_e \, \tau_e$, which implies
\begin{align}
\|\dev_3 \Sigma_e\|^2&
=\langle F_e^T\, (\dev_3\tau_e) \, F_e^{-T},F_e^T\, (\dev_3\tau_e) \, F_e^{-T}\rangle=\langle B_e\, (\dev_
3\tau_e) , (\dev_3\tau_e) \, B_e^{-1}\rangle=\|\dev_3\tau_e\|^2.\notag
\end{align}
This fact can be also proved using the fact that, in the isotropic case, both tensors $\Sigma_e$ and $\tau_e$ are symmetric and they have the same invariants. Therefore, using \eqref{7.80} we obtain that  in the isotropic case we have
$\sqrt{2}\, \mathcal{E}_{\rm e}({\Sigma_{_{\rm eH}}},\frac{1}{3}{\boldsymbol{\sigma}}_{\!\mathbf{y}}^2)
   =\mathcal{E}_{\rm e}(\tau_{_{\rm eH}},\frac{2}{3}\, {\boldsymbol{\sigma}}_{\!\mathbf{y}}^2).
    $
    \end{remark}}
\medskip

Summarizing the properties of the 9-dimensional flow  rule for the plastic distortion \eqref{choicchi}  we have:
\begin{itemize}
\item[i)] it is thermodynamically correct ($\frac{\rm d}{\rm d t}[W(F\, F_p^{-1})]\leq 0$);
\item[ii)] the right hand side is  a function of $F$ and $F_p^{-1}$;
\item[iii)]  plastic incompressibility:  the constraint $\det F_p(t)=1$ for all $t\geq 0$ follows from the flow rule, $F_p(t)\in{\rm GL}^+(3)$ for all $t\geq 0$;
\item[iv)] the above properties imply that the flow rule \eqref{choicchi} is consistent;
\item[v)] it satisfies the principle of maximum dissipation and is an associated plasticity model;
\item[vi)] elastic unloading remains rank-one convex under arbitrary plastic predeformation.
\end{itemize}

\section{Conclusion and open problems}

We have shown that the multiplicative plasticity models preserve ellipticity in purely elastic processes at frozen plastic variables provided that the initial elastic response is elliptic.  Preservation of LH-ellipticity is, in our view, a property which should be satisfied by any hyperelastic-plastic model since the elastically unloaded material specimen should respond reasonable under further purely elastic loading. In contrast to multiplicative models, the much used additive logarithmic model does not preserve LH-ellipticity in general \cite{NeffGhibaAdd}.

An interesting question concerns the requirements that one should impose on the elastic response for arbitrary large distortional strains. One may reasonably argue that these requirements are void of any relevance, since the material can never be observed in a state of large distortional strain: prior to that, dissipative processes will occur. In the case of the energy $W_{_{\rm eH}}$, which is not rank-one elliptic for extreme distortional  strains, we have explicitly shown that elastic unloading will remain rank-one convex and the true-stress-true-strain relation remains monotone. This is a remarkable feature in geometrically nonlinear material models. In a future contribution we will provide the analytical proof for the rank-one convexity domain for $W_{_{\rm eH}}$ in $n=3$.

\bibliographystyle{plain} %plain

\begin{thebibliography}{10}

\bibitem{armero1993priori}
F.~Armero and. J.C.~Simo.
\newblock A priori stability estimates and unconditionally stable product
  formula algorithms for nonlinear coupled thermoplasticity.
\newblock {\em Int. J. Plast.}, 9(6):749--782, 1993.

\bibitem{Balzani_Schroeder_Gross_Neff05}
D.~Balzani, J.~Schr\"oder, D.~Gross, and P.~Neff.
\newblock Modeling of {A}nisotropic {D}amage in {A}rterial {W}alls {B}ased on
  {P}olyconvex {S}tored {E}nergy {F}unctions.
\newblock In D.R.J. Owen, E.~Onate, and B.~Suarez, editors, {\em Computational
  Plasticity VIII, Fundamentals and Applications, Part 2}, pages 802--805.
  CIMNE, Barcelona, 2005.

\bibitem{bertram1999alternative}
A.~Bertram.
\newblock An alternative approach to finite plasticity based on material
  isomorphisms.
\newblock {\em Int. J. Plast.}, 15(3):353--374, 1999.

\bibitem{bertram2012elasticity}
A.~Bertram.
\newblock {\em Elasticity and plasticity of large deformations: {A}n
  introduction}.
\newblock Springer, 2012.

\bibitem{Bruhns01}
O.T. Bruhns, H.~Xiao, and A.~Mayers.
\newblock Constitutive inequalities for an isotropic elastic strain energy
  function based on {H}encky's logarithmic strain tensor.
\newblock {\em Proc. Roy. Soc. London A}, 457:2207--2226, 2001.

\bibitem{Bruhns02JE}
O.T. Bruhns, H.~Xiao, and A.~Mayers.
\newblock Finite bending of a rectangular block of an elastic {Hencky}
  material.
\newblock {\em J. Elasticity}, 66(3):237--256, 2002.

\bibitem{bruhns1999self}
O.T. Bruhns, H.~Xiao, and A.~Meyers.
\newblock Self-consistent {E}ulerian rate type elasto-plasticity models based
  upon the logarithmic stress rate.
\newblock {\em Int. J. Plast.}, 15(5):479--520, 1999.

\bibitem{caminero2011modeling}
M.A. Caminero, F.J. Mont{\'a}ns, and K.J. Bathe.
\newblock Modeling large strain anisotropic elasto-plasticity with logarithmic
  strain and stress measures.
\newblock {\em Comp. Struct.}, 89(11):826--843, 2011.

\bibitem{carstensen2002non}
C.~Carstensen, K.~Hackl, and A.~Mielke.
\newblock Non--convex potentials and microstructures in finite--strain
  plasticity.
\newblock {\em Proc. Royal Soc. London. Series A: Math. Phys. Eng. Sci.},
  458(2018):299--317, 2002.

\bibitem{cleja2003consequences}
S.~Cleja-{\c{T}}igoiu.
\newblock Consequences of the dissipative restrictions in finite anisotropic
  elasto-plasticity.
\newblock {\em Int. J. Plast.}, 19(11):1917--1964, 2003.

\bibitem{cleja2013orientational}
S.~Cleja-{\c{T}}igoiu and L.~Iancu.
\newblock Orientational anisotropy and strength-differential effect in
  orthotropic elasto-plastic materials.
\newblock {\em Int. J. Plast.}, 47:80--110, 2013.

\bibitem{cleja2000eshelby}
S.~Cleja-{\c{T}}igoiu and G.A. Maugin.
\newblock Eshelby's stress tensors in finite elastoplasticity.
\newblock {\em Acta Mech.}, 139(1-4):231--249, 2000.

\bibitem{dettmer2004theoretical}
W.~Dettmer and S.~Reese.
\newblock On the theoretical and numerical modelling of {Armstrong-Frederick}
  kinematic hardening in the finite strain regime.
\newblock {\em Comp. Meth. Appl. Mech. Eng.}, 193(1):87--116, 2004.

\bibitem{dvorkin1994finite}
E.N. Dvorkin, D.~Pantuso, and E.A. Repetto.
\newblock A finite element formulation for finite strain elasto-plastic
  analysis based on mixed interpolation of tensorial components.
\newblock {\em Comp. Meth. Appl. Mech. Eng.}, 114(1):35--54, 1994.

\bibitem{ebobisse2010existence}
F.~Ebobisse and P.~Neff.
\newblock Existence and uniqueness for rate-independent infinitesimal gradient
  plasticity with isotropic hardening and plastic spin.
\newblock {\em Math. Mech. Solids}, 15(6):691--703, 2010.

\bibitem{BatheEterovic}
A.L. Eterovic and { K.-J. Bathe}.
\newblock A hyperelastic-based large strain elasto-plastic constitutive
  formulation with combined isotropic-kinematic hardening using the logarithmic
  stress and strain measures.
\newblock {\em Int. J. Num. Meth. Eng.}, 30:1099--1114, 1990.

\bibitem{gabriel1995some}
G.~Gabriel and K.J. Bathe.
\newblock Some computational issues in large strain elasto-plastic analysis.
\newblock {\em Comp. and Struct.}, 56(2):249--267, 1995.

\bibitem{geers2004finite}
M.G.D. Geers.
\newblock Finite strain logarithmic hyperelasto-plasticity with softening: a
  strongly non-local implicit gradient framework.
\newblock {\em Comp. Meth. Appl. Mech. Eng.}, 193(30):3377--3401, 2004.

\bibitem{glugegraphical}
R.~Gl{\"u}ge and J.~Kalisch.
\newblock Graphical representations of the regions of rank-one-convexity of
  some strain energies.
\newblock {\em Tech. Mech.}, 32:227--237, 2012.

\bibitem{gupta2007evolution}
A.~Gupta, D.J. Steigmann, and J.S. St{\"o}lken.
\newblock On the evolution of plasticity and incompatibility.
\newblock {\em Math. Mech. Solids}, 12(6):583--610, 2007.

\bibitem{gupta2011aspects}
A.~Gupta, D.J. Steigmann, and J.S. St{\"o}lken.
\newblock Aspects of the phenomenological theory of elastic-plastic
  deformation.
\newblock {\em J. Elast.}, 104(1-2):249--266, 2011.

\bibitem{gurtin1983relationship}
M.E. Gurtin and K.~Spear.
\newblock On the relationship between the logarithmic strain rate and the
  stretching tensor.
\newblock {\em Int. J. Solids Struct.}, 19(5):437--444, 1983.

\bibitem{hanin1956isotropic}
M.~Hanin and M.~Reiner.
\newblock On isotropic tensor-functions and the measure of deformation.
\newblock {\em Z. Angew. Math. Phys.}, 7(5):377--393, 1956.

\bibitem{heiduschke1995logarithmic}
K.~Heiduschke.
\newblock The logarithmic strain space description.
\newblock {\em Int. J. Solids Struct.}, 32(8):1047--1062, 1995.

\bibitem{heiduschke1996computational}
K.~Heiduschke.
\newblock Computational aspects of the logarithmic strain space description.
\newblock {\em Int. J. Solids Struct.}, 33(5):747--760, 1996.

\bibitem{henann2009large}
D.~Henann and L.~Anand.
\newblock A large deformation theory for rate-dependent elastic--plastic
  materials with combined isotropic and kinematic hardening.
\newblock {\em Int. J. Plast.}, 25(10):1833--1878, 2009.

\bibitem{chleboun16extension}
M.~Hor\'{a}k and M.~Jir\'{a}sek.
\newblock An extension of small-strain models to the large-strain range based
  on an additive decomposition of a logarithmic strain.
\newblock In J.~Chleboun, K.~Segeth, J.~{\v{S}}{\i}stek, and T.~Vejchodsk{\`y},
  editors, {\em Progr. Algorithms Num. Math.}, volume~16, pages 88--93.

\bibitem{Hutchinson82}
J.W. Hutchinson and K.W. Neale.
\newblock Finite strain {${J}_2$}-deformation theory.
\newblock In D.E. Carlson and R.T. Shield, editors, {\em Proceedings of the
  IUTAM Symposium on Finite Elasticity}, pages 237--247. Martinus Nijhoff,
  \texttt{https://www.uni-due.de/imperia/md/content/mathematik/ag\_neff/hutchinson\_ellipticity80.pdf},
  1982.

\bibitem{jog2002foundations}
C.S. Jog.
\newblock {\em Foundations and Applications of Mechanics: {C}ontinuum
  mechanics}, volume~1.
\newblock CRC Press, 2002.

\bibitem{jog2002explicit}
C.S. Jog.
\newblock On the explicit determination of the polar decomposition in
  $n$-dimensional vector spaces.
\newblock {\em J. Elasticity}, 66(2):159--169, 2002.

\bibitem{jog2013conditions}
C.S. Jog and K.D. Patil.
\newblock Conditions for the onset of elastic and material instabilities in
  hyperelastic materials.
\newblock {\em Arch. Appl. Mech.}, 83:1--24, 2013.

\bibitem{krishnan2014polyconvex}
J.~Krishnan and D.~Steigmann.
\newblock A polyconvex formulation of isotropic elastoplasticity theory.
\newblock {\em IMA J. Appl. Math.}, page doi:10.1093/imamat/hxt049, 2014.

\bibitem{kroner1955fundamentale}
E.~Kr{\"o}ner.
\newblock Der fundamentale {Z}usammenhang zwischen {V}ersetzungsdichte und
  {S}pannungsfunktionen.
\newblock {\em Z. Angew. Math. Phys.}, 142(4):463--475, 1955.

\bibitem{kroner1958kontinuumstheorie}
E.~Kr{\"o}ner.
\newblock {\em Kontinuumstheorie der {V}ersetzungen und {E}igenspannungen}.
\newblock Springer, Berlin, 1958.

\bibitem{kroner1959allgemeine}
E.~Kr{\"o}ner.
\newblock Allgemeine {K}ontinuumstheorie der {V}ersetzungen und
  {E}igenspannungen.
\newblock {\em Arch. Rat. Mech. Anal.}, 4(1):273--334, 1959.

\bibitem{LankeitNeffNakatsukasa}
J.~Lankeit, P.~Neff, and Y.~Nakatsukasa.
\newblock The minimization of matrix logarithms: {O}n a fundamental property of
  the unitary polar factor.
\newblock {\em Lin. Alg. Appl.}, 449(0):28 -- 42, 2014.

\bibitem{LNPzamp2013}
J.~Lankeit, P.~Neff, and D.~Pauly.
\newblock $\text{Uniqueness of integrable solutions to } {\bigtriangledown
  \zeta}={G}{\,\zeta,\ \zeta|_{\Gamma}=0} \text{ for integrable tensor
  coefficients } {G}$ $ \text{ and applications to elasticity}$.
\newblock {\em Z. Angew. Math. Phys.}, 64:1679--1688, 2013.

\bibitem{lee1969elastic}
E.H. Lee.
\newblock Elastic-plastic deformation at finite strains.
\newblock {\em J. Appl. Mech.}, 36(1):1--6, 1969.

\bibitem{RobertNeff}
R.~Martin and P.~Neff.
\newblock Minimal geodesics on $gl(n)$ for left-invariant,
  right-$o(n)$-invariant {R}iemannian metrics.
\newblock {\em Preprint arXiv: 1409.7849}, 2014.

\bibitem{masud1997finite}
A.~Masud, M.~Panahandeh, and F.~Aurrichio.
\newblock A finite-strain finite element model for the pseudoelastic behavior
  of shape memory alloys.
\newblock {\em Comp. Meth. Appl. Mech. Eng.}, 148(1):23--37, 1997.

\bibitem{maugin1994eshelby}
G.~Maugin.
\newblock Eshelby stress in elastoplasticity and ductile fracture.
\newblock {\em Int. J. Plast.}, 10(4):393--408, 1994.

\bibitem{meyers1999consistency}
A~Meyers.
\newblock On the consistency of some {Eulerian} strain rates.
\newblock {\em Z. Angew. Math. Mech.}, 79(3):171--177, 1999.

\bibitem{meyers2006choice}
A.~Meyers, H.~Xiao, and O.T. Bruhns.
\newblock Choice of objective rate in single parameter hypoelastic deformation
  cycles.
\newblock {\em Comp. Struct.}, 84(17):1134--1140, 2006.

\bibitem{Miehe92}
C.~Miehe.
\newblock {\em Kanonische {M}odelle multiplikativer {E}lasto-{P}lastizit\"at.
  {T}hermodynamische {F}ormulierung und numerische {I}mplementation.}
\newblock Habilitationsschrift, Universit\"at Hannover, Germany, 1992.

\bibitem{miehe1994representation}
C.~Miehe.
\newblock On the representation of {Prandtl-Reuss} tensors within the framework
  of multiplicative elastoplasticity.
\newblock {\em Int. J. Plast.}, 10(6):609--621, 1994.

\bibitem{miehe2014variational1}
C.~Miehe.
\newblock Variational gradient plasticity at finite strains. {Part I:} {M}ixed
  potentials for the evolution and update problems of gradient-extended
  dissipative solids.
\newblock {\em Comp. Meth. Appl. Mech. Eng.}, 268:677--703, 2014.

\bibitem{MieheApel}
C.~Miehe, N.~Apel, and M.~Lambrecht.
\newblock Anisotropic additive plasticity in the logarithmic strain space:
  modular kinematic formulation and implementation based on incremental
  minimization principles for standard materials.
\newblock {\em Comp. Meth. Appl. Mech. Eng.}, 191(47):5383--5425, 2002.

\bibitem{miehe2014variational}
C.~Miehe, F.~Welschinger, and F.~Aldakheel.
\newblock Variational gradient plasticity at finite strains. {Part II}:
  {L}ocal-global updates and mixed finite elements for additive plasticity in
  the logarithmic strain space.
\newblock {\em Comp. Meth. Appl. Mech. Eng.}, 268:704--734, 2014.

\bibitem{mielke2002finite}
A.~Mielke.
\newblock {\em Geometry, Mechanics, and Dynamics}, chapter Finite
  Elastoplasticity {Lie} Groups and Geodesics on $SL (d)$, pages 61--90.
\newblock Springer, 2002.

\bibitem{mielke2003energetic}
A.~Mielke.
\newblock Energetic formulation of multiplicative elasto-plasticity using
  dissipation distances.
\newblock {\em Cont. Mech. Thermodyn.}, 15(4):351--382, 2003.

\bibitem{mosler2007variational}
J.~Mosler and M.~Ortiz.
\newblock Variational h-adaption in finite deformation elasticity and
  plasticity.
\newblock {\em Int. J. Num. Meth. Eng.}, 72(5):505--523, 2007.

\bibitem{muller2006thermodynamic}
Ch. M{\"u}ller and O.T. Bruhns.
\newblock A thermodynamic finite-strain model for pseudoelastic shape memory
  alloys.
\newblock {\em Int. J. Plast.}, 22(9):1658--1682, 2006.

\bibitem{naghdabadi2012viscoelastic}
R.~Naghdabadi, M.~Baghani, and J.~Arghavani.
\newblock A viscoelastic constitutive model for compressible polymers based on
  logarithmic strain and its finite element implementation.
\newblock {\em Finite Elem. Anal. Des.}, 62:18--27, 2012.

\bibitem{Neff_Diss00}
P.~Neff.
\newblock {\em Mathematische {A}nalyse multiplikativer {V}iskoplastizit\"at.
  {P}h.{D}. {t}hesis, {Technische Universit\"at Darmstadt}.}
\newblock Shaker Verlag,
  ISBN:3-8265-7560-1,\texttt{https://www.uni-due.de/~hm0014/Download\_files/cism\_convexity08.pdf},
  Aachen, 2000.

\bibitem{Neff00b}
P.~Neff.
\newblock On {K}orn's first inequality with nonconstant coefficients.
\newblock {\em Proc. Roy. Soc. Edinb. A}, 132:221--243, 2002.

\bibitem{Neff01d}
P.~Neff.
\newblock Finite multiplicative plasticity for small elastic strains with
  linear balance equations and grain boundary relaxation.
\newblock {\em Cont. Mech. Thermodynamics}, 15(2):161--195, 2003.

\bibitem{HutterSFB02}
P.~Neff.
\newblock Some results concerning the mathematical treatment of finite
  multiplicative elasto-plasticity.
\newblock In K.~Hutter and H.~Baaser, editors, {\em {SFB}298: {D}eformation and
  failure in metallic and granular structures-{A}bschlussbericht}, volume~10 of
  {\em Lecture Notes in Applied and Computational Mechanics}, pages 251--274.
  Springer Verlag,
  \texttt{http://www.uni-due.de/imperia/md/content/mathematik/ag\_neff/neff\_hutter\_sfbtext03.pdf},
  2003.

\bibitem{Neff01c}
P.~Neff.
\newblock Local existence and uniqueness for quasistatic finite plasticity with
  grain boundary relaxation.
\newblock {\em Quart. Appl. Math.}, 63:88--116, 2005.

\bibitem{Neff_Cosserat_plasticity05}
P.~Neff.
\newblock A finite-strain elastic-plastic {C}osserat theory for polycrystals
  with grain rotations.
\newblock {\em Int. J. Eng. Sci.}, 44:574--594, 2006.

\bibitem{neff2009notes}
P.~Neff, K.~Che{\l}mi{\'n}ski, and H.D. Alber.
\newblock Notes on strain gradient plasticity: finite strain covariant
  modelling and global existence in the infinitesimal rate-independent case.
\newblock {\em Math. Mod. Meth. Appl. Sci.}, 19:307--346, 2009.

\bibitem{neff2007numerical}
P.~Neff, K.~Che{\l}mi{\'n}ski, W.~M{\"u}ller, and C.~Wieners.
\newblock A numerical solution method for an infinitesimal elasto-plastic
  {C}osserat model.
\newblock {\em Math. Mod. Meth. Appl. Sci.}, 17(08):1211--1239, 2007.

\bibitem{neff2013hencky}
P.~Neff, B.~Eidel, F.~Osterbrink, and R.~Martin.
\newblock The {H}encky strain energy $\|\log {U}\|^2$ measures the geodesic
  distance of the deformation gradient to $\rm{SO(3)}$ in the canonical
  left-invariant {R}iemannian metric on $\rm{GL(3)}$.
\newblock {\em PAMM}, 13(1):369--370, 2013.

\bibitem{Neff_Osterbrink_Martin_hencky13}
P.~Neff, B.~Eidel, F.~Osterbrink, and R.~Martin.
\newblock {T}he isotropic {H}encky strain energy measures the geodesic distance
  of the deformation gradient {$F \in\mathrm{GL^+}(n)$} to $\mathrm{SO}(n)$ in
  the unique left invariant {R}iemannian metric on $\mathrm{GL}(n)$ which is
  also right $\mathrm{O}(n)$-invariant.
\newblock {\em in preparation}, 2014.

\bibitem{NeffGhibaAdd}
P.~Neff and I.D. Ghiba.
\newblock Loss of ellipticity in additive logarithmic finite strain plasticity.
\newblock {\em in work}, 2014.

\bibitem{NeffGhibaLankeit}
P.~Neff, I.D. Ghiba, and J.~Lankeit.
\newblock The exponentiated {H}encky-logarithmic strain energy. {P}art {I}:
  {C}onstitutive issues and rank--one convexity.
\newblock {\em Preprint arXiv:1403.3843}, 2014.

\bibitem{NeffGhibaPoly}
P.~Neff, I.D. Ghiba, J.~Lankeit, R.~Martin, and D.~Steigmann.
\newblock The exponentiated {H}encky-logarithmic strain energy. {P}art {II}:
  {C}oercivity, planar polyconvexity and existence of minimizers.
\newblock {\em Preprint arXiv:1408.4430}, 2014.

\bibitem{Neff_Knees06}
P.~Neff and D.~Knees.
\newblock Regularity up to the boundary for nonlinear elliptic systems arising
  in time-incremental infinitesimal elasto-plasticity.
\newblock {\em SIAM J. Math. Anal.}, 40(1):21--43, 2008.

\bibitem{neff2010parallel}
P.~Neff, W.~M{\"u}ller, and C.~Wieners.
\newblock Parallel simulation of an infinitesimal elasto-plastic {C}osserat
  model.
\newblock {\em GAMM-Mitteilungen}, 33(1):79--94, 2010.

\bibitem{Neff_Nagatsukasa_logpolar13}
P.~Neff, Y.~Nakatsukasa, and A.~Fischle.
\newblock A logarithmic minimization property of the unitary polar factor in
  the spectral norm and the {Frobenius} matrix norm.
\newblock {\em SIAM J. Matrix Analysis}, 35:1132--1154, 2014.

\bibitem{neff2009numerical}
P.~Neff, A.~Sydow, and C.~Wieners.
\newblock Numerical approximation of incremental infinitesimal gradient
  plasticity.
\newblock {\em Int. J. Num. Meth. Eng.}, 77(3):414--436, 2009.

\bibitem{Neff_Wieners}
P.~Neff and C.~Wieners.
\newblock {C}omparison of models for finite plasticity. {A} numerical study.
\newblock {\em Comput. Visual. Sci.}, 6:23--35, 2003.

\bibitem{ogden1970compressible}
R.W. Ogden.
\newblock Compressible isotropic elastic solids under finite
  strain-constitutive inequalities.
\newblock {\em Quart. J. Mech. Appl. Math.}, 23(4):457--468, 1970.

\bibitem{Ogden83}
R.W. Ogden.
\newblock {\em Non-{L}inear {E}lastic {D}eformations.}
\newblock Mathematics and its Applications. Ellis Horwood, Chichester, 1.
  edition, 1983.

\bibitem{papadopoulos1998general}
P.~Papadopoulos and J.~Lu.
\newblock A general framework for the numerical solution of problems in finite
  elasto-plasticity.
\newblock {\em Comp. Meth. Appl. Mech. Eng.}, 159(1):1--18, 1998.

\bibitem{peric1999new}
D.~Peri{\'c} and E.A. Neto.
\newblock A new computational model for {Tresca} plasticity at finite strains
  with an optimal parametrization in the principal space.
\newblock {\em Comp. Meth. Appl. Mech. Eng.}, 171(3):463--489, 1999.

\bibitem{peric1992model}
D.~Peri{\'c}, D.R.J. Owen, and M.E. Honnor.
\newblock A model for finite strain elasto-plasticity based on logarithmic
  strains: Computational issues.
\newblock {\em Comp. Meth. Appl. Mech. Eng.}, 94(1):35--61, 1992.

\bibitem{Raoult86}
A.~Raoult.
\newblock Non-polyconvexity of the stored energy function of a
  {S}t.{V}enant-{K}irchhoff material.
\newblock {\em Aplikace Matematiky}, 6:417--419, 1986.

\bibitem{reese2008finite}
S.~Reese and D.~Christ.
\newblock Finite deformation pseudo-elasticity of shape memory
  alloys--{C}onstitutive modelling and finite element implementation.
\newblock {\em Int. J. Plast.}, 24(3):455--482, 2008.

\bibitem{Reese97a}
S.~Reese and P.~Wriggers.
\newblock A material model for rubber-like polymers exhibiting plastic
  deformation: computational aspects and a comparison with experimental
  results.
\newblock {\em Comp. Meth. Appl. Mech. Engrg.}, 148:279--298, 1997.

\bibitem{sansour2001dual}
C.~Sansour.
\newblock On the dual variable of the logarithmic strain tensor, the dual
  variable of the {Cauchy} stress tensor, and related issues.
\newblock {\em Int. J. Solids Struct.}, 38(50):9221--9232, 2001.

\bibitem{sansour2008physical}
C.~Sansour.
\newblock On the physical assumptions underlying the volumetric-isochoric split
  and the case of anisotropy.
\newblock {\em Eur. J. Mech.-A/Solids}, 27(1):28--39, 2008.

\bibitem{sansour2003viscoplasticity}
C.~Sansour and W.~Wagner.
\newblock Viscoplasticity based on additive decomposition of logarithmic strain
  and unified constitutive equations: {Theoretical} and computational
  considerations with reference to shell applications.
\newblock {\em Comp. Struct.}, 81(15):1583--1594, 2003.

\bibitem{cism_book_schroeder_neff09}
J.~Schr\"oder and P.~Neff.
\newblock {\em Poly, quasi and rank-one convexity in mechanics}.
\newblock CISM-Course Udine. Springer, 2009.

\bibitem{shutov2013analysis}
A.V. Shutov and J.~Ihlemann.
\newblock Analysis of some basic approaches to finite strain elasto-plasticity
  in view of reference change.
\newblock {\em accepted, Int. J. Plast, doi:
  http://dx.doi.org/10.1016/ijplas.2014.07.004}, 2014.

\bibitem{shutov2008finite}
A.V. Shutov and R.~Krei{\ss}ig.
\newblock Finite strain viscoplasticity with nonlinear kinematic hardening:
  {P}henomenological modeling and time integration.
\newblock {\em Comp. Meth. Appl. Mech. Eng.}, 197(21):2015--2029, 2008.

\bibitem{simo1992algorithms}
J.C. Simo.
\newblock Algorithms for static and dynamic multiplicative plasticity that
  preserve the classical return mapping schemes of the infinitesimal theory.
\newblock {\em Comp. Meth. Appl. Mech. Eng.}, 99(1):61--112, 1992.

\bibitem{simo1993recent}
J.C. Simo.
\newblock Recent developments in the numerical analysis of plasticity.
\newblock In E.~Stein, editor, {\em Progress in computational analysis of
  inelastic structures}, pages 115--173. Springer, 1993.

\bibitem{Simo98a}
J.C. Simo.
\newblock Numerical analysis and simulation of plasticity.
\newblock In P.G. Ciarlet and J.L. Lions, editors, {\em Handbook of {N}umerical
  {A}nalysis}, volume~VI. Elsevier, Amsterdam, 1998.

\bibitem{Simo98b}
J.C. Simo and J.R. Hughes.
\newblock {\em Computational {I}nelasticity.}, volume~7 of {\em
  Interdisciplinary Applied Mathematics}.
\newblock Springer, Berlin, 1998.

\bibitem{Simo85}
J.C. Simo and M.~Ortiz.
\newblock A unified approach to finite deformation elastoplastic analysis based
  on the use of hyperelastic constitutive equations.
\newblock {\em Comp. Meth. Appl. Mech. Engrg.}, 49:221--245, 1985.

\bibitem{steigmann2011mechanically}
D.J. Steigmann and A.~Gupta.
\newblock Mechanically equivalent elastic-plastic deformations and the problem
  of plastic spin.
\newblock {\em Theor. Appl. Mech.}, 38(4):397--417, 2011.

\bibitem{tanaka1979finite}
E.~Tanaka.
\newblock Finite element investigation of the problem of large strains,
  formulated in terms of true stress and logarithmic strain.
\newblock {\em Acta Mech.}, 34(1-2):129--141, 1979.

\bibitem{vallee1978lois}
C.~Vall\'{e}e.
\newblock Lois de comportement \'{e}lastique isotropes en grandes
  d\'{e}formations.
\newblock {\em Int. J. Eng. Sci.}, 16(7):451--457, 1978.

\bibitem{vallee2008dual}
C.~Vall{\'e}e, D.~Fortun{\'e}, and C.~Lerintiu.
\newblock On the dual variable of the {Cauchy} stress tensor in isotropic
  finite hyperelasticity.
\newblock {\em Comptes Rendus Mecanique}, 336(11):851--855, 2008.

\bibitem{xiao2000consistent}
H.~Xiao, O.~Bruhns, and A.~Meyers.
\newblock A consistent finite elastoplasticity theory combining additive and
  multiplicative decomposition of the stretching and the deformation gradient.
\newblock {\em Int. J. Plast.}, 16(2):143--177, 2000.

\bibitem{xiao1999existence}
H.~Xiao, O.T. Bruhns, and A.~Meyers.
\newblock Existence and uniqueness of the integrable-exactly hypoelastic
  equation {${\tau}^{\circ}=\lambda ({\rm tr} {D}){\rm {I}}+ 2\mu {D}$} and its
  significance to finite inelasticity.
\newblock {\em Acta Mech.}, 138(1-2):31--50, 1999.

\bibitem{zhu2014logarithmic}
Y.~Zhu, G.~Kang, Q.~Kan, and O.~Bruhns.
\newblock Logarithmic stress rate based constitutive model for cyclic loading
  in finite plasticity.
\newblock {\em Int. J. Plast.}, 54:34--55, 2014.

\end{thebibliography}
\addcontentsline{toc}{section}{References}
\begin{footnotesize}

\end{footnotesize}

\end{document}